\newcommand{\Ex}{\mathbb{E}}
\newcommand{\Var}{\mathrm{Var}}
\newcommand{\Cov}{\mathrm{Cov}}
\newcommand{\dur}{\mathrm{Dur}}
\newcommand{\excdur}{\mathrm{ExcDur}}
\newcommand{\bz}{\bar z}
\newcommand{\bfi}{\bar f_{\infty}}
\newcommand{\bD}{\bar D}
\newcommand{\bP}{\bar P}
\newcommand{\Dz}{\Delta z}
\newcommand{\one}{\mathbbm{1}}
\newcommand{\mT}{\mathcal{T}}
\newcommand{\mE}{\mathcal{E}}
\renewcommand{\phi}{\varphi}
\title{How to hedge extrapolated yield curves}
\author{Andreas Lagerås}
\address{AFA Insurance\\106 27 Stockholm\\Sweden}
\thanks{The opinions expressed in this paper are not necessarily those of AFA Insurance.}
\email{andreas@math.su.se}
\keywords{Term structure, Calculus of variations, Functional analysis}
\subjclass[2010]{91G80, 
                 46N10, 
                 91G30
}
\newtheorem*{thm}{Theorem}
\newtheorem{prop}{Proposition}
\theoremstyle{remark}
\newtheorem{ex}{Exemple}
\begin{document}

\maketitle

\begin{abstract}
We present a framework on how to hedge the interest rate sensitivity of liabilities discounted by an extrapolated yield curve. The framework is based on functional analysis in that we consider the extrapolated yield curve as a functional of an observed yield curve and use its G\^ateaux variation to understand the sensitivity to any possible yield curve shift. We apply the framework to analyse the Smith-Wilson method of extrapolation that is proposed by the European Insurance and Occupational Pensions Authority (EIOPA) in the coming EU legis\-lation Solvency II, and the method recently introduced, and currently prescribed, by the Swedish Financial Supervisory Authority.
\end{abstract}

\section{Introduction}

Insurance companies, especially life assurance companies, can have liabilities further into the future than there exists a liquid market for fixed income financial assets. These liabilities can therefore not be given a pure market value, but must be discounted by a yield that is to some extent model based and \emph{extrapolated} from market yields beyond some \emph{last liquid point} (LLP).

This issue is related to the fact that one typically wants a yield curve for a continuum of times to maturity, whereas only a discrete number of financial instruments are used for deriving the curve. Furthermore, the price of a zero coupon bond is the only directly observable true market discount factor, and zero coupon bonds are not that common. One therefore has to \emph{bootstrap} a yield curve even for time to maturities shorter than the LLP.

The extrapolation method is sometimes essentially the same as the bootstrap method, and the framework presented in this paper can be used to analyse them both from the same point of view, viz.\ that we want to know the sensitivity of a discount factor for a given time to maturity with respect to \emph{all} market rates that are used to build the discount curve. We will however focus on extrapolation and not bootstrapping in itself.

The idea is to compute the total differential of the discount factor with respect to the prices of market instruments used to build the curve. This is essentially the same thing as computing the key rate durations of a liability discount factor. This could become unwieldy as the differential would have as many terms as the number of market instruments used. To obtain \emph{qualitative} results on different extrapolation methods, we consider an idealised case where a continuum of zero coupon bonds are used for curve construction. The differential in this case is replaced with the G\^ateaux variation, and the sum in the discrete case is in general replaced by an integral which turns out to be easy to interpret.

We apply this framework to some simple extrapolation schemes, among them the method prescribed by Swedish Financial Supervisory Authority (SFSA) \cite{fi}, and the so called Smith-Wilson method prescribed by European Insurance and Occupational Pensions Authority (EIOPA) for the coming EU wide Solvency II regulations, \cite{eiopa,sw}. 

Since we are mostly interested in qualitative results we will use continuously compounded rates when we describe the methods, even where the legislation might use annual compounding. We limit the our investigation to the case of instantaneous changes in the market yield curve. Changes over longer time spans are also of importance in practice since hedges need to be reset.

Qualitatively then, the SFSA method makes all liabilities beyond the last market observation sensitive to the zero coupon yield at that observation, whereas the Smith-Wilson method makes them sensitive to both the zero coupon yield and the forward rate at the last market observation. The dependency on the forward rate has been noted by other researchers in the special case of annually spaced market rates for ``typical'' shapes of the market yield curve, see e.g.\ \cite{cardano}. We show how this is an intrisic feature of the method regardless of the shape of the market yield curve. The dependency on the last forward rate is unfortunate as it would be very hard to initiate a hedge in any substantial size, since exposure to a forward rate is replicated by shorting one bond and going long another.

The main contributions of this paper are twofold. Firstly, the framework provides a straightforward way to compute the optimal hedge of liabilities with regards to an extrapolation method. Secondly, the composition of the optimal hedge allows for a discussion of whether a given extrapolation method is feasible for the individual insurance company and for the financial market as a whole.

This paper proceeds with an introduction to the general theory and framework in Section \ref{sec:theory}, and we then present some extrapolation methods in Section \ref{sec:methods}. In Section \ref{howto}, we apply the theory to the methods and derive the optimal hedges, if possible, and in Section \ref{sec:conclusions} we discuss the results and outlooks to future research.

\section{Theory}\label{sec:theory}

\subsection{Discount factors and yields}

Let $y$ be a generic yield curve and let $D_t:=e^{-ty_t}$ be the discount factor for time to maturity $t$ with $y$ as discount curve. We write $D_t[y]$ to stress that it is a function of $y$. We use brackets to highlight arguments that are themselves functions, i.e.\ $D_t[y]$ can be called a \emph{functional}. 

Our analysis will mostly be static, in that we will consider instantaneous changes in a yield curve and its consequences. We take the present time to be 0 so that we can refer to ``time to maturity'' simply as ``maturity'' or ``time''. The unit of time is usually taken to be years, and we will restrict our analysis to times $t\in\mT:=[0,T]$ where $T$ is arbitrary but fixed, say $T=200$ to cover most imaginable insurance cash flows.

We let $z$ denote the market zero coupon bond curve so that $D_t[z]$ is the price of a zero coupon bond with maturity $t$, and let $\bz$ denote the discount curve to be used for valuing liabilities, i.e.\ the present value of one unit of liability at time $t$ is $D_t[\bz]$. We will consider several cases where $\bz$ is a function of $z$ and we write $\bz[z]$ to stress this. We also define $\bD_t[z]:=D_t[\bz[z]]$ --- or in other notation $\bD := D\circ \bz$ --- for the liability discount factor as a function of market rates.

The general theory only uses the (zero coupon) yields, but for some applications we need forward rates. We define the market (instantaneous) forward rate $f_t:=\frac{d}{dt}(tz_t)$ and the discount forward rate $\bar f_t:=\frac{d}{dt}(t\bar z_t)$, so that
\begin{equation}\label{z_from_f}
z_t=\frac{1}{t}\int_0^tf_s\,ds,
\end{equation}
and similarly for $\bz$ and $\bar f$.

Since we intend to do some functional analysis with yield and discount curves we must decide on a space of functions for them. In general we assume that $z\in\mathcal{C}_{\mathrm{s}}(\mT)$, the space of cadlag functions with at most a finite number of jumps, with norm $\|z\|:=\sup_{t\in\mT}|z_t|$. When we need the existence of forward rates we assume that $z\in\mathcal{C}_{\mathrm{s}}^1(\mT)$, the space of of cadlag functions with at most a finite number of jumps and with first derivatives, with norm $\|z\|:=\sup_{t\in\mT}|z_t|+\sup_{t\in\mT}|z'_t|$. Both these are normed linear spaces, see \cite[Ch.\ 1.3]{sagan}. One could probably choose larger spaces, but these suffice for the applications we have in mind.

Furthermore, we assume that $\bz$ is defined on the whole of $\mathcal{C}_{\mathrm{s}}(\mT)$ --- or $\mathcal{C}_{\mathrm{s}}^1(\mT)$ when forward rates are needed --- and has range in the respective space.

\subsection{Cash flows and present values}

We represent a generic cash flow with a function $C$, where $C_t$ is defined as the cumulative cash flow in the interval $[0,t]$. Note that the function $C$ has a jump at $t$ if there is a lump sum payment at time $t$. We will somewhat sloppily refer to the cash flow represented by $C$ simply as ``the cash flow $C$'' or ``$C$''. We insist on $C$ being of bounded variation, i.e.\ that it can be written as a difference of two non-decreasing functions, $C=C^+-C^-$. This seems reasonable with the two terms representing, say, inflows ($C^+$) and outflows ($C^-$). The present value of $C$ discounted by the yield curve $y$ is given by the Stieltjes integral
$$
P[y;C] := C_0+\int_\mT D_t[y]\,dC_t.
$$
The integral is well defined since $C$ is of bounded variation.

We define $C^*_t[y] := C_0+\int_0^t D_s[y]dC_t$ as the present value of the cash flow up to time $t$ so that $P[y;C]=C_0+\int_\mT dC^*_t[y] = C^*_T[y]$ and, at least informally, $dC^*_t[y]=D_t[y]dC_t$ is the present value of the cash flow at time $t$. We will drop the argument in the notation when it can be inferred from the context. Typically, $A^*_t = A^*_t[z]$ and $L^*_t=L^*_t[\bz]$.

The market value of asset cash flows represented by the function $A$ and the discounted values of liabilities represented by the function $L$ are thus $P[z;A]$ and $P[\bz;L]$ respectively. We define $\bar P[z;L] := P[\bz[z];L]$, i.e.\ the discounted value of the liabilities as a function of $z$.

\subsection{Hedging}

Let $L$ be a liability cash flow. We say that the asset cash flow $A$ is a \emph{perfect hedge} of $L$ if for all $z$
\begin{equation}\label{P_bP}
P[z;A]=\bar P[z;L].
\end{equation}
We say that $\bz$ is \emph{perfectly hedgeable} if there for all $L$ exists a perfect hedge.

$A$ is a \emph{first order hedge} of $L$ (at $z$) if
$$
P[z+\epsilon\Dz;A]-P[z;A]=\bP[z+\epsilon\Dz;L]-\bP[z;L]+o(\epsilon).
$$

Note that the definition of first order hedge is contingent on $z$: $A$ might be a first order hedge of $L$ at one $z$ but not another. Also note that a first order hedge at $z$ does not necessarily have $P[z;A]=\bar P[z;L]$: it is any possible change in present value of the liabilities --- as the market yield curve changes from $z$ to $z+\epsilon\Dz$ --- that is matched by the assets (up to a remainder small in $\epsilon$), not the present value itself.

We say that $\bz$ is \emph{first order hedgeable} if there for all $L$ and at all $z$ exists a first order hedge.

\subsection{Functional derivatives and Taylor approximation}\label{sec:gateaux_taylor}

In this subsection \ref{sec:gateaux_taylor}, $f$ will denote a generic functional and not necessarily the forward curve.

The \emph{G\^ateaux variation}, or simply the \emph{variation}, of a function $f[g]$ in the direction $h$ is defined by
\begin{equation}\label{Gv_def}
\delta f[g|h] := \lim_{\epsilon\to 0^+}\frac{f[g+\epsilon h]-f[g]}{\epsilon}.
\end{equation}

The variation is homogenous of degree one in $h$ and we will use this for Taylor approximation, i.e.\ if $g$ changes to $g+\Delta g$ we have, see \cite[Thm 1.5]{sagan},
\begin{equation}\label{taylor1}
f[g+\Delta g] = f[g] + \delta f[g|\Delta g] + R(\Delta g),
\end{equation}
where $\lim_{\epsilon\to 0^+}R(\epsilon \Delta g)/\epsilon = 0$.

Note that $h$ in \eqref{Gv_def} and $\Delta g$ in \eqref{taylor1} in general can be \emph{functions} (or functionals) themselves. We clearly need $g+\Delta g$ to be in the domain of $f$ in \eqref{taylor1}, and we will in the following always assume that the shift, $\Delta g$ in this case, is such that this criterion is fulfilled, e.g.\ by having a small enough norm.

The following chain rule holds
\begin{equation}\label{chain}
\delta (f \circ g)[h|k] = \delta f\big[g[h]\big|\delta g[h|k]\big].
\end{equation}

\textbf{Remark.} There are more restrictive versions of derivatives on function spaces that one could use. For example, if one restrics $\delta f[g|h]$ to be linear and bounded in $h$ one get the so-called \emph{G\^ateaux differential}, and if furthermore the remainder $R$ in the Taylor expansion \eqref{taylor1} tends to 0 uniformly in $\Delta g$ and not only along the ray $\{\epsilon\Delta g:\epsilon>0\}$, i.e.\ $\lim_{\|\Delta g\|\to 0}R(\Delta g)/\|\Delta g\| = 0$, one get the so-called \emph{Fr\'echet differential}. The linear functional $\delta f[g|\,\cdot\,]$ is called the \emph{G\^ateaux} and \emph{Fr\'echet derivative}, in the respective cases --- see \cite{nashed}. 

Example \ref{Gv_nec} below shows why the linearity of the differential cannot be taken for granted in the applications we consider, and that we therefore need the generality afforded by the G\^ateaux variation.

\begin{ex}
The variation of a discount factor $D_t[y]$ in the direction $\Delta y$ is $\delta D_t[y|\Delta y]=-t\Delta y_tD_t[y]$ since
\begin{align*}
\delta D_t[y|\Delta y] &= \lim_{\epsilon\to 0^+}\frac{D_t[y+\epsilon\Delta y]-D_t[y]}{\epsilon}=\lim_{\epsilon\to 0^+}\frac{e^{-t(y_t-\epsilon\Delta y_t)}-e^{-ty_t}}{\epsilon} \\
&=-t\Delta y_t e^{-ty_t} = -t\Delta y_t D_t[y].
\end{align*}
\end{ex}

\begin{ex}
For the present value of a cash flow we have for a general direction $\Delta y$
\begin{align}
\delta C^*_T[y|\Delta y] &= \delta P[y;C|\Delta y] = \int_\mT \delta D_t[y|\Delta y]\,dC_t \notag\\
&= -\int_\mT t\Delta y_t D_t[y]\,dC_t = -\int_\mT t\Delta y_t\,dC_t^*[y] \label{diffP}
\end{align}
In the special case where $\Delta y$ is a constant function, say $\Delta y_t=c$ for all $t$, we get
$$
\delta C^*_T[y|\Delta y] = -c\int_\mT t\,dC^*_t[y]
$$
With $c=-1$ we call this quantity the \emph{dollar duration} of $C$ (at $y$). We call it dollar duration regardless of in what currency $C$ is denominated. In practice one often consider the case $c=-0.0001$ (a \emph{basis point}), and talk of the \emph{dollar value of a basis point} (DV01). The \emph{duration} of $C$ is defined as the dollar duration divided by the present value:
$$
\dur[y,C] := \frac{\int_\mT t\,dC^*_t[y]}{C^*_T[y]}. 
$$
\end{ex}

\begin{ex}\label{Gv_nec}
Let $\bz_t[z]:=\max(0,z_t-c)$ for some positive constant $c$, i.e.\ the liability discount rate is equal to the market rate adjusted downwards with $c$ and put to 0 if this difference is negative. Similar constructions abound in various ``stress tests'' of liability discount rates that supervisory agencies use. By the definition \eqref{Gv_def} we get
\begin{align*}
\delta\bz_t[z|\Delta z] &= \lim_{\epsilon\to 0^+}\frac{\max(0,z_t+\epsilon\Delta z_t-c)-\max(0,z_t-c)}{\epsilon} \\
&= \begin{cases}
0, & z_t < c \\
0, & z_t = c\text{ and }\Delta z_t \leq 0 \\
\Delta z_t, & z_t = c\text{ and }\Delta z_t > 0 \\
\Delta z_t, & z_t > c.
\end{cases}
\end{align*}
The G\^ateaux variation $\delta\bz[z|\Delta z]$ is clearly not linear in $\Delta z$ at $z$ if $z_t=c$ for some $t$.
\end{ex}

The second order G\^ateaux variation of $f$ in directions $h$ and $k$ is
$$
\delta^2 f[g|h,k] := \lim_{\epsilon\to 0^+}\frac{\delta f[g+\epsilon k|h]-\delta f[g|h]}{\epsilon},
$$
and to ease notation we write $\delta^2 f[g|h]:=\delta^2 f[g|h,h]$ when both directions are the same. This will be used for second order Taylor approximation:
$$
f[g+\Delta g] = f[g] + \delta f[g|\Delta g] + \frac12 \delta^2 f[g|\Delta g] + R_2(\Delta g),
$$
with $\lim_{\epsilon\to 0^+}R_2(\epsilon \Delta g)/\epsilon^2 = 0$.

The chain rule for the second order variation reads
\begin{align*}
\delta^2(f\circ g)[h|k,l] &= \delta^2 f\big[g[h]\big|\delta g[h|k],\delta g[h|l]\big] +  \delta f\big[g[h]\big|\delta^2 g[h|k,l]\big],
\end{align*}
or with both directions the same,
$$
\delta^2(f\circ g)[h|k] = \delta^2f\big[g[h]\big|\delta g[h|k]\big] +  \delta f\big[g[h]\big|\delta^2 g[h|k]\big].
$$

\begin{ex}
The second order variation of a discount factor $D_t[y]$ in the direction $\Delta y$ is $\delta^2D_t[y|\Delta y]=t^2(\Delta y_t)^2D_t[y]$. For a present value of a cash flow we have $\delta^2 C^*_T[y|\Delta y]=\int_\mT t^2(\Delta y_t)^2\,dC^*_t$. When the yield curve $y$ shifts in a parallel fashion, i.e.\ $\Delta y_t=c$ for some constant $c$ for all $t$, $\delta^2 C^*_T[y|\Delta y]=c^2\int_\mT t^2\,dC^*_t$. The quantity $\int_\mT t^2\,dC^*_t\big/C^*_T$ is called the \emph{convexity} of $C$ (at $y$).
\end{ex}

\begin{ex}
By the chain rule, the second order variation of $\bD_t[z]$ in the direction $\Dz$ is
$$
\delta^2\bD_t[z|\Dz] = \left(t^2(\delta\bz_t[z|\Dz])^2 - t\delta^2\bz_t[z|\Dz]\right)\bD_t[z]
$$
and thus
\begin{equation}\label{diff2P}
\delta^2\bP[z;L|\Dz] = \int_\mT \left(t^2(\delta\bz_t[z|\Dz])^2 - t\delta^2\bz_t[z|\Dz]\right)dL^*_t.
\end{equation}
\end{ex}

\subsection{General results}

Let us first consider perfect hedges. The following proposition shows that perfect hedgeability is equivalent to the liability discount factor being affine in the market discount factors. 

\begin{prop}\label{prop:perfect_hedgeability}
$\bz$ is perfectly hedgeable if and only if for all $t$, $\bD_t[z]=P[z;C^{(t)}]$ for some cash flow $C^{(t)}$ independent of $z$. The perfect hedge has the form $A_s=L_0+\int_\mT C^{(t)}_s\,dL_t$.
\end{prop}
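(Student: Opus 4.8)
The plan is to prove the two implications separately, with the ``only if'' direction producing the cash flows $C^{(t)}$ and the ``if'' direction assembling the hedge from them.

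For the forward implication, suppose $\bz$ is perfectly hedgeable. Fix $t\in\mT$ and consider the single liability that pays one unit at maturity $t$, i.e.\ $L^{(t)}_s:=\one\{s\geq t\}$, whose associated measure $dL^{(t)}$ is a unit point mass at $t$ (and $L^{(t)}_0=0$ for $t>0$). Then $\bP[z;L^{(t)}]=\int_\mT D_s[\bz[z]]\,dL^{(t)}_s=D_t[\bz[z]]=\bD_t[z]$. By hypothesis this liability admits a perfect hedge $A^{(t)}$, so $P[z;A^{(t)}]=\bP[z;L^{(t)}]=\bD_t[z]$ for all $z$. Setting $C^{(t)}:=A^{(t)}$ --- a fixed cash flow, independent of $z$ by the very definition of perfect hedge --- yields the claimed representation $\bD_t[z]=P[z;C^{(t)}]$.

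For the converse, assume $\bD_t[z]=P[z;C^{(t)}]$ for every $t$ with each $C^{(t)}$ independent of $z$, and let $L$ be an arbitrary liability. I would take the candidate from the statement, $A_s:=L_0+\int_\mT C^{(t)}_s\,dL_t$, first checking it is a well-defined cash flow of bounded variation so that $P[z;A]$ is meaningful. Expanding $\bP[z;L]=L_0+\int_\mT \bD_t[z]\,dL_t=L_0+\int_\mT P[z;C^{(t)}]\,dL_t$ and writing $P[z;C^{(t)}]=C^{(t)}_0+\int_\mT D_s[z]\,dC^{(t)}_s$, I would compute $P[z;A]$ directly: the constant $L_0$ reappears through $A_0$, while $dA_s=\int_\mT dC^{(t)}_s\,dL_t$ feeds the remaining double integral $\int_\mT D_s[z]\int_\mT dC^{(t)}_s\,dL_t$. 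Interchanging the order of the two Stieltjes integrations returns exactly $L_0+\int_\mT P[z;C^{(t)}]\,dL_t=\bP[z;L]$, and since this holds for every $z$, the function $A$ is a perfect hedge of $L$.

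The forward implication is essentially immediate once the right test liability is chosen. The main obstacle is the bookkeeping in the converse: verifying that $A$ is of bounded variation (so the outer integral defining $P[z;A]$ converges) and, above all, justifying the Fubini-type interchange of the iterated Stieltjes integrals in $s$ and $t$. This requires the family $\{C^{(t)}\}$ to be suitably regular --- for instance uniformly of bounded variation and measurable in $t$ --- which I would either fold into the meaning of a parametrised ``cash flow $C^{(t)}$'' or extract from the structure of $\bD_t[z]$. Everything else reduces to the linearity of $P[z;\,\cdot\,]$ in its cash flow argument.
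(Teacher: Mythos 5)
Your proposal is correct and follows essentially the same route as the paper: the necessity direction uses the same lump-sum test liabilities $L^{(t)}_s=\one\{s\geq t\}$ and sets $C^{(t)}=A^{(t)}$, and the sufficiency direction verifies the stated $A$ by the same Fubini-type interchange of the iterated Stieltjes integrals. The regularity caveats you raise (bounded variation of $A$, measurability of $t\mapsto C^{(t)}$) are simply left implicit in the paper's proof, so flagging them is a point of extra care rather than a divergence in method.
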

\begin{proof} To prove necessity, we want to show that perfect hedgeability implies the existence of the cash flow $C^{(t)}$ of the theorem. Consider the liability cash flow $L_s^{(t)}:=\one\{s\geq t\}$, representing a lump sum of 1 at time $t$, and let $A^{(t)}$ be the perfect hedge of $L^{(t)}$. By the definition of a discount factor and the definition of a perfect hedge, \eqref{P_bP}, we have $\bD_t[z] = \bP[z;L^{(t)}] = P[z;A^{(t)}]$, and we can let $C^{(t)}=A^{(t)}$. Also note that $A^{(t)}_s=L_0^{(t)}+\int_\mT C^{(u)}_sdL^{(t)}_u$.

To prove sufficiency, we want to show that the asset cash flow $A$ of the theorem is a perfect hedge. 
\begin{align*}
P[z;A] &= A_0 + \int_{s\in\mT}D_s[z]dA_s \\
&= L_0+\int_{t\in\mT}C^{(t)}_0\,dL_t + \int_{s\in\mT} D_s[z]\int_{t\in\mT} dC_s^{(t)}\,dL_t \\
&= L_0+\int_{t\in\mT}\bigg(C^{(t)}_0+\int_{s\in\mT} D_s[z]\,dC_s^{(t)}\bigg)\,dL_t \\
& = L_0+\int_\mT P[z;C^{(t)}]\,dL_t \\
&= L_0+\int_\mT \bD_t[z]\,dL_t = \bP[z;L].
\end{align*}


\end{proof}

We now turn to first order hedgeability.
\begin{prop}\label{prop:first_order_hedgeability}
$\bz$ is first order hedgeable at $z$ if and only if $\delta P[z;A|\Dz] = \delta \bP[z;L|\Dz]$, provided the G\^ateaux variation exists, and in that case the first order hedge $A$ solves
\begin{equation}\label{hedgeeq}
\int_\mT t\Dz_t\,dA^*_t = \int_\mT t\delta\bz_t[z|\Dz]\,dL^*_t
\end{equation}
\end{prop}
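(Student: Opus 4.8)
The plan is to reduce the defining relation of a first order hedge to an equality of G\^ateaux variations, and then to make both variations explicit via the chain rule. Throughout I take the stated proviso --- that the relevant variations exist --- as given, since it is exactly what licences the Taylor expansion \eqref{taylor1}.

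First I would rewrite the hedge condition. Using the homogeneity of the variation in its direction together with \eqref{taylor1}, I have
\begin{align*}
P[z+\epsilon\Dz;A]-P[z;A] &= \epsilon\,\delta P[z;A|\Dz] + o(\epsilon),\\
\bP[z+\epsilon\Dz;L]-\bP[z;L] &= \epsilon\,\delta \bP[z;L|\Dz] + o(\epsilon).
\end{align*}
Subtracting, the first order hedge condition $P[z+\epsilon\Dz;A]-P[z;A]=\bP[z+\epsilon\Dz;L]-\bP[z;L]+o(\epsilon)$ is seen to be equivalent to $\epsilon\big(\delta P[z;A|\Dz]-\delta\bP[z;L|\Dz]\big)=o(\epsilon)$. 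Dividing by $\epsilon$ and letting $\epsilon\to 0^+$ forces $\delta P[z;A|\Dz]=\delta\bP[z;L|\Dz]$, which gives necessity; conversely, if the two variations agree then the subtraction above leaves a remainder that is $o(\epsilon)$, which is precisely the hedge condition, giving sufficiency. Since a hedge must match every admissible shift, this equivalence is to be read for all directions $\Dz$.

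Next I would make both variations explicit. For the asset side, \eqref{diffP} applied to $C=A$ at $y=z$ gives directly $\delta P[z;A|\Dz]=-\int_\mT t\Dz_t\,dA^*_t$. For the liability side I use that $\bP[\,\cdot\,;L]=P[\,\cdot\,;L]\circ\bz$ and apply the chain rule \eqref{chain} with outer functional $P[\,\cdot\,;L]$ and inner functional $\bz$, obtaining $\delta\bP[z;L|\Dz]=\delta P\big[\bz[z];L\,\big|\,\delta\bz[z|\Dz]\big]$. Then \eqref{diffP}, now evaluated at the discount curve $\bz[z]$ and in the direction $\delta\bz[z|\Dz]$, yields $\delta\bP[z;L|\Dz]=-\int_\mT t\,\delta\bz_t[z|\Dz]\,dL^*_t$, with $L^*_t=L^*_t[\bz]$ as in the stated conventions. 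Equating the two variations and multiplying through by $-1$ produces exactly \eqref{hedgeeq}.

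The one place that needs care --- and the only real obstacle --- is the chain-rule step: one must feed the inner variation $\delta\bz[z|\Dz]$ in as the \emph{direction} of the outer variation, and keep track of the fact that the Stieltjes integrator on the liability side is $dL^*_t=D_t[\bz]\,dL_t$, i.e.\ discounting with $\bz$ rather than with $z$. I would emphasise that no linearity of $\delta\bz[z|\,\cdot\,]$ in $\Dz$ is used anywhere, so the argument retains full validity even in the non-linear situation of Example \ref{Gv_nec}; equation \eqref{hedgeeq} is then simply the identity in $\Dz$ whose solvability for $A^*$ characterises the existence of a first order hedge.
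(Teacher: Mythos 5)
Your proof is correct and follows essentially the same route as the paper: reduce the hedge condition to equality of G\^ateaux variations via the Taylor expansion \eqref{taylor1}, then compute both sides using \eqref{diffP} and the chain rule \eqref{chain}. The only (harmless) differences are that you spell out the $\epsilon$-homogeneity step more explicitly than the paper does, and you apply the chain rule to the whole functional $\bP[\,\cdot\,;L]=P[\,\cdot\,;L]\circ\bz$ rather than pointwise to $\bD_t=D_t\circ\bz$ under the integral sign, which yields the same identity.
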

\begin{proof}
Applying the Taylor approximation \eqref{taylor1} to $P[z+\Dz;A]$ and $\bP[z+\Dz;L]$ gives us
\begin{align*}
P[z+\Dz;A] - P[z;A] &= \delta P[z;A|\Dz] + R(\Dz) \\
\bP[z+\Dz;L] - \bP[z;L] &= \delta \bP[z;L|\Dz] + R(\Dz), 
\end{align*}
and first order hedgeability is thus equivalent to $\delta P[z;A|\Dz] = \delta \bP[z;L|\Dz]$. Equation \eqref{hedgeeq} follows from equation \eqref{diffP} and the chain rule \eqref{chain}.
\begin{align*}
-\delta P[z;A|\Dz] &= \int_\mT t\Dz_t\,dA_t^* \\
-\delta \bP[z;L|\Dz] &= -\int_\mT \delta \bD_t[z|\Dz]\,dL_t = -\int_\mT \delta D_t\big[\bz[z]\big|\delta\bz[z|\Dz]\big]\,dL_t \\
&= \int_\mT t\delta\bz_t[z|\Dz] D_t[\bz[z]]\,dL_t = \int_\mT t\delta\bz_t[z|\Dz]\,dL_t^* 
\end{align*}
\end{proof}
We will use equation \eqref{hedgeeq} repetedly and call it the \emph{hedge equation}.

If $A$ is a first order hedge of $L$, the second order Taylor expansion can be used to understand how well the hedge performs. By equation \eqref{diff2P},
\begin{align}
P[z&+\Dz;A]-P[z;A]-(\bP[z+\Dz;L]-\bP[z;L])\notag\\
&= \frac{1}{2}\left(\delta^2 P[z;A|\Dz] - \delta^2\bP[z;L|\Dz]\right) + R_2(\Dz) \notag\\
&= \frac{1}{2}\bigg(\int_\mT t^2(\Dz_t)^2\,dA^*_t \notag\\
&\qquad\quad - \int_\mT \left(t^2(\delta\bz_t[z|\Dz])^2 - t\delta^2\bz_t[z|\Dz]\right)\,dL^*_t\bigg) + R_2(\Dz). \label{hedge_cnvx}
\end{align}

The sensitivity of the present value of liabilities with respect to the parameters of $\bz$, is also of interest. If $\theta$ is a scalar parameter,
\begin{equation}\label{param_sens}
\frac{d}{d\theta}P[\bz;L]=\int_\mT \frac{d}{d\theta}D_t[\bz]\,dL_t = -\int_\mT t\frac{d\bz_t}{d\theta}\,dL^*_t.
\end{equation}

\section{Some extrapolation methods}\label{sec:methods}

Here we will describe six possible extrapolation methods. As noted in the introduction we will use continuously compounded interest rates even if the methods prescibed by law might use annual compounding. We describe these methods in the idealised case where there zero coupon bond prices are available for all maturities up to some specified time.

Methods 1 and 2 are described starting from zero coupon yields, whereas the methods 3 and 4 are described starting from forward rates. Methods 1 and 3 do not really extrapolate the respective type of curve but set the long term yield or forward rate to a predetermined constant value. Methods 2 and 4 are constant extrapolation of the respective type of curve. Method 5 is prescibed by the Swedish Financial Supervisory Authority (SFSA), and method 6, called the Smith-Wilson method, is suggested by European Insurance and Occupational Pensions Authority (EIOPA) to be used under Solvency II.

\addtocounter{subsection}{-1}
\subsection{Commonalities}

All the methods have $\bar z_t=z_t$ for $t\leq \tau$. The time to maturity $\tau$ is sometimes called the \emph{last liquid point} (LLP). The methods differ in their expressions for $\bz_t$ with $t\in\mE:=(\tau,T]$, i.e.\ in the extrapolated part of the yield curve.

Methods 1, 3, 5, and 6 have a predetermined limiting value for the forward rate, $\bar f_{\infty}:=\lim_{t\to\infty}\bar f_t$, called the \emph{ultimate forward rate} (UFR). Note that the existence of the limit $\bar f_{\infty}$ implies that $\lim_{t\to\infty}\bar z_t=\bar f_{\infty}$, though the convergence to the limiting value is slower for $\bar z$ than for $\bar f$.

Methods 5 and 6 both have an additional parameter $\kappa>\tau$ that is interpreted as the time to maturity where the extrapolated forward rate should equal or be close enough to the UFR. Details follow in the respective sections below.

Many proposed regulatory methods actually add a constant $c$, or a constant curve $c_t$, to the market yield curve $z$ before it is used to derive the bootstrapped and extrapolated curve, i.e.\ one uses $\bz[z+c]$ instead of $\bz[z]$.

The constant $c$ is introduced to adjust the market rates for issues such as (a.) the zero rates $z$ might have a credit risk component if they are based on swaps ($c<0$), and (b.) the supervisory agency might want to give insurance companies some relief ($c>0$).

This implies that $\bar D_t = e^{-tc_t}D_t$ for $t\leq \tau$, so the cash flows --- or parts of cash flows --- that were perfectly hedgeable when $c$ had not been introduced are still perfectly hedgeable. For first order hedgeability, we have to consider shifts $\Dz$ of the yield curve $z+c$ rather than $z$, and since $\delta\bz[z+c|\Dz]=\delta\bz[z|\Dz]$ the important part of the hedge equation \eqref{hedgeeq} is unaffected.

The sensitivity of liabilities with respect to changing $c$, say from $c$ to $c+\Delta c$, is also quite transparent since we can apply \eqref{diffP} with $\Delta c$ substituted for $\Delta y$.

We therefore proceed with $c=0$. 

\subsection{Method 1. Predetermined long term zero coupon yields.}

This is not really an extrapolation method as all zero coupon yields beyond $\tau$ are set to a constant. It is nevertheless useful as a baseline method.
$$
\bar z_t = \bar f_{\infty},\quad t\in\mE,
$$
which implies $\bar f_t=\bar f_{\infty}$ and $\bar D_t = e^{-\bar f_\infty t}$. Note that $\bar z$ has a discontinuity at $\tau$ unless $z_{\tau}$ happens to equal $\bar f_{\infty}$.

\subsection{Method 2. Constant extrapolation of zero coupon yields.}

Here
$$
\bar z_t = z_\tau,\quad t\in\mE,
$$
which implies $\bar f_t = z_\tau$ and $\bar D_t = e^{-t z_{\tau}}=D_{\tau}^{t/\tau}$. 
This method gives a discount curve that is continuous at $\tau$, though it might have a kink, i.e.\ a discontinuous first derivative, at $\tau$.

\subsection{Method 3. Predetermined long term forward rates}

This method is similar to Method 1 though it introduces constant forward rates beyond $\tau$ rather than constant zero coupon rates.
$$
\bar f_t = \bar f_{\infty},\quad t\in\mE,
$$
which implies
$$
\bar z_t = \frac{\tau}{t}z_\tau + \left(1-\frac{\tau}{t}\right)\bar f_\infty,
$$
and $\bar D_t = e^{-\tau z_{\tau}-(t-\tau)\bar f_\infty}=e^{-(t-\tau)\bar f_\infty}D_\tau$. 

Note that even if the forward curve $\bar f$ is discontinuous at $\tau$, the discount curve $\bar z$ is not. If $\bar f$ is discontinuous at $\tau$, the discount curve $z$ will have a kink at $\tau$.

\subsection{Method 4. Constant extrapolation of forward rates}

This method is similar to Method 2 though it constantly extra\-polates forward rates beyond $\tau$ rather than zero coupon rates.
$$
\bar f_t = f_\tau,\quad t\in\mE,
$$
which implies
$$
\bar z_t = \frac{\tau}{t}z_\tau + \left(1-\frac{\tau}{t}\right)\bar f_\tau.
$$
and $\bar D_t = e^{-\tau z_{\tau}-(t-\tau)\bar f_\tau}$.  

Since the forward curve $\bar f$ is continuous at $\tau$, $\bar z$ has no kink there.

\subsection{Method 5. SFSA}\label{method_5}

This method is prescribed for Swedish insurance companies by the SFSA \cite{fi} and it is an elaboration on Method 3, where predetermined long term discount forward rate $\bar f_{\infty}$ is phased in linearly between $\tau$ and $\kappa$.
$$
\bar f_t := \begin{cases} \frac{\kappa-t}{\kappa-\tau}f_t + \frac{t-\tau}{\kappa-\tau}\bar f_{\infty}, & \tau < t \leq \kappa, \\
                          \bar f_{\infty}                                                  & t > \kappa.        \end{cases}
$$
We show in Appendix \ref{z_method_5} that
\begin{align*}
\bz_t &= \begin{cases} \frac{\kappa-t}{\kappa-\tau}z_t + \frac{1}{t}\frac{1}{\kappa-\tau}\int_{\tau}^t sz_sds + \frac{t-\tau}{\kappa-\tau}\big(1-\frac{\tau}{t}\big)\frac{\bar f_{\infty}}{2}, & \tau < t \leq \kappa, \\
                     \frac{1}{t}\frac{1}{\kappa-\tau}\int_{\tau}^{\kappa} sz_sds + \big(1-\frac{\tau+\kappa}{2t}\big)\bar f_{\infty}, & t > \kappa.        \end{cases} 
\end{align*}
Since the forward curve $\bar f$ is continuous at $\tau$ and $\kappa$, $\bar z$ has no kinks there.

\subsection{Method 6. Smith-Wilson}

This method has been suggested by EIOPA \cite{eiopa} and is usually described in terms of inter\-pola\-tion and extra\-pola\-tion of a finite number of discount factors \cite{eiopa,sw}. For a curve built from the market rates at time to maturities $t_1,\dots,t_N$,
$$
\bar D_t := e^{-\bfi t}+\sum_{i=1}^NW(t,t_i)\zeta_i,
$$
where
$$
W(s,t) := e^{-\bfi (s+t)}\big(\alpha\min(s,t)-e^{-\alpha\max(s,t)}\sinh(\alpha\min(s,t))\big),
$$
and where $\bm{\zeta}:=(\zeta_1,\dots,\zeta_N)$ is determined by $\bar D_{t_i} = D_{t_i}$ for $i=1,\dots,N$. As written here, this model has a free parameter $\alpha> 0$. This parameter governs the speed of convergence for the forward rates towards the UFR; the higher the $\alpha$, the faster the convergence. The actual EIOPA method requires that $\alpha$ shall be set to ensure that $|\bar f_{\kappa}-\bar f_{\infty}|$ is less than or equal to a specified value $\epsilon$ with $\kappa>\tau$. If $f_{\tau}$ is close enough to $\bfi$, $\alpha$ is not well-defined.

A comprehensive analysis of the case where $\alpha$ is defined by a convergence criterion is beyond the scope of this paper, but we will indicate the necessary steps in that direction.

We show in Appendix \ref{z_method_6} that the continuous version of this method, when market observations are used up to $\tau$, has the discount factor
$$
\bar D_t = e^{-\bar f_{\infty}(t-\tau)}D_{\tau}\bigg(1+(\bfi -f_{\tau})\frac{1-e^{-\alpha(t-\tau)}}{\alpha}\bigg),\quad t\in\mE.
$$
Note that $\frac{1-e^{-\alpha(t-\tau)}}{\alpha}$ is increasing from 0 to $\frac{1}{\alpha}$ as $t$ goes from $\tau$ to $\infty$. Thus, the discount factor will become \emph{negative} for high enough values of $t$ unless $f_{\tau} \leq \bfi +\alpha$. This problem with the Smith-Wilson method has also been noted by others, e.g.\ Rebel \cite{cardano}.

Provided then that $f_{\tau} \leq  \bfi +\alpha$, we have
\begin{equation}\label{sw_z}
\bar z_t = \frac{\tau}{t}z_{\tau}+\left(1-\frac{\tau}{t}\right)\bar f_{\infty}-\frac{1}{t}\log\left(1+(\bar f_{\infty}-f_{\tau})\frac{1-e^{-\alpha(t-\tau)}}{\alpha}\right),
\end{equation}
and
\begin{equation}\label{sw_f}
\bar f_t = \bar f_{\infty} - \frac{(\bar f_{\infty}-f_{\tau})e^{-\alpha(t-\tau)}}{1+(\bar f_{\infty}-f_{\tau})\frac{1-e^{-\alpha(t-\tau)}}{\alpha}}.
\end{equation}

A zero coupon yield curve is arbitrage free if and only if the corresponding forward curve is non-negative. The previously described extrapolation methods are clearly arbitrage free if the market curve $z$ is arbitrage free and $\bar f_{\infty}\geq 0$.

For the Smith-Wilson curve we can do the following analysis. Since $e^{-\alpha(t-\tau)}$ is decreasing from 1  as $t$ increases from $\tau$, the forward rate $\bar f_t$ tends mono\-ton\-ous\-ly toward $\bar f_{\infty}$ as $t$ increases from $\tau$. This means that forward rates $\bar f_t$ are non-negative for all $t\in\mE$ provided that $f_{\tau}\geq 0$ and $\bar f_{\infty}\geq 0$. The Smith-Wilson method thus provides an arbitrage free extrapolated yield curve in this continuous setting. The Smith-Wilson curve is however \emph{not} necessarily arbitrage free in the case when it is fitted to a finite number of market yields, see Appendix \ref{arb}.

%
%
%
%
%
%
%
%

\section{How to hedge}\label{howto}

We proceed by analysing each method of Section \ref{sec:methods}. In order to apply the hedging equation \eqref{hedgeeq} we need the Gâteaux variation $\delta\bar z[z|\Delta z]$. If the zero coupon yield curve changes from $z$ to $z+\Delta z$, then the forward rate curve changes from $f$ to $f+\Delta f$, where $\Delta f_t := \frac{d}{dt}(t\Delta z_t)$, since $f$ is linear in $z$.

Since all methods have a perfect hedge for cash flows at times $t\leq\tau$, we assume that $L_\tau=0$ in order to focus on cash flows for times $t\in \mE$. We also assume $L_T>0$ to avoid trivialities.

\subsection{(Not) hedging Method 1}

We recall that $\bar z_t:= \bz_{\infty}$ for $t>\tau$, and thus $\delta\bz_t[z|\Dz] = 0$. The hedge equation reads (recall that $L_{\tau}=0$),
\begin{align*}
\int_\mE t\Dz_t\,dA^*_t = \int_\mE t\delta\bar z_t[z|\Dz]\,dL^*_t = 0
\end{align*}
and this holds if $A_t=0$ for all $t$, i.e.\ the ``extrapolated'' part of the discount curve is \emph{not} hedged. This is reasonable since it is not sensitive to changing market rates. This is also a perfect hedge according to Proposition \ref{prop:perfect_hedgeability}, where we have $C^{(t)}_s = e^{-\bfi t}$ for all $s$ and $t$.

\subsection{Hedging Method 2}

In this method $\bD_t =D_{\tau}^{t/\tau}$ for $t\in\mE$, and since it is nonlinear in $D_{\tau}$, by Proposition \ref{prop:perfect_hedgeability} there can be no perfect hedge. Turning then to first order hedges, we have $\bz_t := z_{\tau}$, so $\delta\bz_t[z|\Dz] = \Dz_{\tau}$. Here the hedge equation is
\begin{align*}
\int_\mE t\Dz_t\,dA^*_t = \Dz_{\tau}\int_\mE t\,dL^*_t.
\end{align*}
This holds if $dA^*_t = \one\{t=\tau\}\frac{1}{\tau}\int_\mE t\,dL^*_t$. The hedge has a lump sum at the LLP whose dollar duration $\tau\,dA^*_{\tau}$ equals that of the dollar duration of all liabilities with times to maturities $t\in\mE$: $\int_\mE t\,dL^*_t$.

If this method were to be mandated for all insurance companies, it would put a lot of buying preasure on the zero coupon bond with maturity $\tau$ since that is needed to hedge all longer liabilities. This could in turn lower $z_{\tau}$ which would increase the value of the liabilities, and this could necessitate even further hedging by companies who had not been fully hedged previously, driving the yield even lower.

It is also worth noting that the market value of the hedge is larger than the present value of the liabilities. In order words, this means that a premium equal to the present value of liabilities is not enough to buy the required hedge, and an insurance company would have to resort to leverage.

Another issue with the method is that the first order hedge is lacking convexity compared to the liabilities. This is seen by inspecting \eqref{diff2P} for this method. The second order variation of the hedge is
\begin{align*}
\int_\mT t^2(\Dz_t)^2\,dA^*_t &= \tau(\Dz_{\tau})^2\int_\mE t\,dL^*.
\end{align*}
Since $\delta^2\bz[z|\Dz]=0$, the second order variation of the liabilities is
\begin{align*}
\int_\mE \left(t^2(\delta\bz_t[z|\Dz])^2 - t\delta^2\bz_t[z|\Dz]\right)\,dL^*_t &= (\Dz_{\tau})^2\int_\mE t^2\,dL^*_t.
\end{align*}
The difference of the second order variations is
$$
\tau(\Dz_{\tau})^2\int_\mE t\,dL^* - (\Dz_{\tau})^2\int_\mE t^2\,dL^*_t = -(\Dz_{\tau})^2\int_\mE t(t-\tau)\,dL^*_t < 0.
$$
The consequence of this is that the hedge must be increased regardless of whether the yield increases or decreses. This variable exposure could in practice be achieved by buying options --- both calls and puts --- on the zero coupon bond used for hedging, so an introduction of this method could also conceivably increase option prices (implied volatilities).

\subsection{Hedging Method 3}

For this method $\bD_t = e^{-(t-\tau)\bar f_\infty}D_\tau$, which is linear in $D_{\tau}$, and it is therefore possible to hedge this method perfectly, by taking $C^{(t)}_s = e^{-(t-\tau)\bar f_\infty}\one\{s\geq\tau\}$. It is also instructive to see the first order properties of the hedge.
$$
\bz_t = \frac{\tau}{t}z_{\tau}+(1-\frac{\tau}{t})\bar f_{\infty}
$$
for $t\in\mE$ and therefore
$$
\delta\bz_t[z|\Dz] = \frac{\tau}{t}\Dz_{\tau}.
$$
We arrive at the hedge equation
\begin{align*}
\int_\mE t\Dz_t\,dA^*_t = \tau\Dz_{\tau}\int_\mE dL^*_t.
\end{align*}
The solution is $dA^*_t = \one\{t=\tau\}\int_\mE dL^*_t=\one\{t=\tau\}L^*_T$. Similarly as for Method 2 the hedge consists of a lump sum at the LLP. The difference is that the market value of the lump sum should equal the present value of all liabilities at times $s\geq \tau$, i.e.\ $\int_\mE dL^*_t$, whereas the hedge for Method 2 requires matching of dollar durations. 

Since $\int_\mE dL^*_t < \frac{1}{\tau}\int_\mE t\,dL^*_t$, the hedge for Method 3 requires less than Method 2 to be invested at the LLP $\tau$. The problem with all hedgers wanting to invest in the zero coupon bond with maturity $\tau$ would thus be diminished.

Also, since the market value of the hedge equals the present value of the liability, no leverage is needed.

\subsection{(Impossibility of) hedging Method 4}

Here
$$
\bz_t = \frac{\tau}{t}z_{\tau}+(1-\frac{\tau}{t})f_{\tau}
$$
for $t\in\mE$, so
$$
\delta\bz_t[z|\Dz] = \frac{\tau}{t}\Dz_{\tau}+(1-\frac{\tau}{t})\Delta f_{\tau}
$$
and we get the hedge equation
\begin{align*}
\int_\mE t\Delta z_t\,dA^*_t &= \tau\Delta z_{\tau}\int_\mE dL^*_t +\Delta f_{\tau}\int_\mE (t-\tau)\,dL^*_t.
\end{align*}
We recognize the first term on the right hand side from the hedge equation of Method 3, and know how to hedge that. The last term with $\Delta f_{\tau}$ is troublesome. A proper hedge of that would require an exposure to the forward rate at time $\tau$ and no other yields at no other times.

A \emph{forward rate agreement} (FRA) is a derivative that provides exposure to the forward rate over a certain interval, say $\tau-\epsilon$ to $\tau$. It can be replicated by the asset flow
\begin{align*}
dF_t &= \begin{cases} \frac{1}{\epsilon}, & t=\tau-\epsilon\\ -\frac{1}{\epsilon}e^{\int_{\tau-\epsilon}^{\tau}f_s\,ds}, & t = \tau, \end{cases} 
\end{align*}
i.e.\ one agrees today to borrow $\frac{1}{\epsilon}$ units at time $\tau-\epsilon$ and repay $\frac{1}{\epsilon}e^{\int_{\tau-\epsilon}^{\tau}f_s\,ds}$ units at time $\tau$. Note that the market value of these two flows cancel each other:
\begin{align*}
dF^*_{\tau-\epsilon} &=D_{\tau-\epsilon}\,dF_{\tau-\epsilon} = \frac{1}{\epsilon}D_{\tau-\epsilon} = \frac{1}{\epsilon}e^{-(\tau-\epsilon) z_{\tau-\epsilon}} = \frac{1}{\epsilon}e^{-\tau z_{\tau}}e^{\int_{\tau-\epsilon}^{\tau}f_s\,ds} \\
&=-e^{-\tau z_{\tau}}\,dF_{\tau}=-D_{\tau}\,dF_{\tau}=-dF^*_{\tau},
\end{align*}
so the market value of this contract is zero at inception: $F^*_T=\int_\mT dF^*_t=0$, and its interest rate sensitivity is
\begin{align*}
\delta F^*_T[z|\Dz] &= \int_\mT t\Dz_t\,dF^*_t = \frac{1}{\epsilon}D_{\tau}\big(-(\tau-\epsilon)\Dz_{\tau-\epsilon} + \tau\Dz_{\tau}\big) \\
&=D_{\tau}\frac{1}{\epsilon}\int_{\tau-\epsilon}^{\tau}\Delta f_s\,ds.
\end{align*}
Note that the value increases with increasing forward rate in contrast to bond values decreasing with increasing yield.

In order to isolate $\Delta f_{\tau}$ we would have to let $\epsilon\to 0$. However, that would mean that the amount borrowed, $\frac{1}{\epsilon}$, would diverge to infinity.

This problem is of course artificial in the sense that it appears due to us insisting on working with a continuum of maturities. In reality one would be exposed to the forward rate between the last two maturities used in constructing the yield curve. It could nonetheless be problematic, though not impossible, to hedge this exposure since the necessary forward rate agreement would entail one being short zero coupon bonds at the second to last maturity and long zero coupon bonds at the last maturity, and this might be hard to achieve in sufficient size, especially if the whole insurance industry wants to short the same bond.

\subsection{Hedging Method 5 (SFSA)}

This method is defined in terms of forward rates and as we show in Appendix \ref{z_method_5},
\begin{align*}
\bar z_t &= \begin{cases} \frac{\kappa-t}{\kappa-\tau}z_t + \frac{1}{t}\frac{1}{\kappa-\tau}\int_{\tau}^t sz_s\,ds + \frac{t-\tau}{\kappa-\tau}\big(1-\frac{\tau}{t}\big)\frac{\bar f_{\infty}}{2}, & \tau < t \leq \kappa, \\
                     \frac{1}{t}\frac{1}{\kappa-\tau}\int_{\tau}^{\kappa} sz_s\,ds + \big(1-\frac{\tau+\kappa}{2t}\big)\bar f_{\infty}, & t > \kappa.        \end{cases} 
\end{align*}
The corresponding discount factor $\bD_t$ is not affine in the market discount factors, so by Proposition \ref{prop:perfect_hedgeability} we can have no perfect hedge. However, first order hedgeability is possible. The G\^ateaux variation is
$$
\delta\bz_t[z|\Dz]  = \begin{cases} \frac{\kappa-t}{\kappa-\tau}\Dz_t + \frac{1}{t}\frac{1}{\kappa-\tau}\int_{\tau}^t s\Dz_s\,ds, & \tau < t \leq \kappa, \\
                     \frac{1}{t}\frac{1}{\kappa-\tau}\int_{\tau}^{\kappa} s\Dz_s\,ds, & t > \kappa,        \end{cases} 
$$
and the hedge equation is
\begin{align*}
\int_\mT \Dz_t\,dA^*_t &= \int_{\tau}^{\kappa}\frac{\kappa-t}{\kappa-\tau}t\Delta z_t\,dL^*_t + \int_{\tau}^{\kappa}\bigg(\frac{1}{\kappa-\tau}\int_{\tau}^t s\Delta z_s\,ds\bigg)\,dL^*_t \\
&\quad + \int_{\kappa}^{T}\bigg(\frac{1}{\kappa-\tau}\int_{\tau}^{\kappa} s\Delta z_s\,ds\bigg)\,dL^*_t \\
&= \int_{\tau}^{\kappa}\frac{\kappa-t}{\kappa-\tau}t\Delta z_t\,dL^*_t + \int_{\tau}^{\kappa}\bigg(\frac{1}{\kappa-\tau}\int_{t}^{\kappa}dL^*_s\bigg) t\Delta z_t\,dt \\
&\quad + \frac{1}{\kappa-\tau}\int_{\kappa}^{T}\,dL^*_s\int_{\tau}^{\kappa} t\Delta z_t\,dt \\
&= \int_{\tau}^{\kappa}\frac{\kappa-t}{\kappa-\tau}t\Delta z_t\,dL^*_t + \int_{\tau}^{\kappa}\bigg(\frac{1}{\kappa-\tau}\int_{t}^{T}dL^*_s\bigg) t\Delta z_t\,dt.
\end{align*}

The hedge is thus $dA^*_t = \frac{\kappa-t}{\kappa-\tau}\,dL^*_t + (\frac{1}{\kappa-\tau}\int_{t}^{T}dL^*_s)\,dt$ for $\tau<t\leq \kappa$.

The hedge consists of two terms. One, $\frac{\kappa-t}{\kappa-\tau}\,dL^*_t$, matches the corresponding liability cash flow at time $t$ to a linearly decreasing extent. The other, $(\frac{1}{\kappa-\tau}\int_{t}^{T}dL^*_s)\,dt = \frac{1}{\kappa-\tau}(L^*_T-L^*_t)\,dt$, consists of a flow that is proportional to the present value of all liabilities larger than or equal to $t$.

The extrapolation described by Method 5 linearly transitions from the market forward rate to the predetermined long term forward rate, in contrast to the instant transition of Method 3. Comparing the hedges for the two methods, we see that the hedge of Method 5 also linearly transitions from exactly hedging present value of each liability cash flow with the market value of an asset cash flow, to the ``edge case'' of Method 3 where the present value of all liabilities with higher maturity are hedged by a single lower duration bond.

Since the hedging is not concentrated to a single bond, but to the whole interval of bonds with maturities between $\tau$ and $\kappa$, this method would probably be even less disruptive to the bond market than Method 3.

The total market value of the hedge equals the present value of the liabilities:
\begin{align*}
\int_{\tau}^{\kappa}dA^*_t &= \int_{\tau}^{\kappa}\frac{\kappa-t}{\kappa-\tau}\,dL^*_t + \frac{1}{\kappa-\tau}\int_{\tau}^{\kappa}(L^*_T-L^*_t)\,dt \\
&= \left[\frac{\kappa-t}{\kappa-\tau}L^*_t\right]_{t=\tau}^{t=\kappa}+ \frac{1}{\kappa-\tau}\int_{\tau}^{\kappa}L^*_t\,dt+L^*_T-\frac{1}{\kappa-\tau}\int_{\tau}^{\kappa}dL^*_t\\ 
&= L^*_T,
\end{align*}
and thus, similarly to Method 3, leverage is not needed.

Since this method is not perfect as Method 3, second order properties of the hedge are worthwhile to investigate. As shown in Appendix \ref{cnvx_method_5}, in contrast to Method 2, the hedge actually has an excess of convexity compared to the liabilities. For parallel shifts of the yield curve $z$ between times $\tau$ and $\kappa$, regardless of whether these are up or down, the hedge would need to be decreased. If an insurance company would like to capitalise on this, it could --- again in contrast to Method 2 --- sell options, and this could push down option prices.

\subsection{(Impossibility of) hedging Method 6 (Smith-Wilson)}

The discrete version of the Smith-Wilson method is clearly perfectly hedgeable since $\bm{\zeta}$ is linear in $\bm{D}=(D_{t_1},\dots,D_{t_N})$, and thus any $\bar D_t$ is too. However, the continuous case is not even first order hedgeable, and the cause points to practical problems even with the ``perfect'' hedge.

Recall \eqref{sw_z}:
$$
\bz_t = \frac{\tau}{t}z_{\tau}+\left(1-\frac{\tau}{t}\right)\bar f_{\infty}-\frac{1}{t}\log\left(1+(\bar f_{\infty}-f_{\tau})\frac{1-e^{-\alpha(t-\tau)}}{\alpha}\right),
$$
for $t\in\mE$ which leads to
$$
\delta\bz_t[z|\Dz] = \frac{\tau}{t}\Dz_{\tau}+c(t)\Delta f_{\tau}
$$
where
$$
c(t) = c(t;\tau,\alpha,f_{\tau},\bar f_{\infty}) := \frac{1-e^{-\alpha(t-\tau)}}{\alpha t}\bigg/\bigg(1+(\bar f_{\infty}-f_{\tau})\frac{1-e^{-\alpha(t-\tau)}}{\alpha}\bigg).
$$
The appearance of $\Delta f_{\tau}$ in the expression for the G\^ateaux variation implies the same problems for this method as for Method 4.

The similarity goes further: The parameter $\alpha$ is typically small, and when $\alpha\to 0$, $c(t)\to (1-\frac{\tau}{t})/(1+(\bar f_{\infty}-f_{\tau})(t-\tau))$, and since $\bar f_{\infty}$ and $f_{\tau}$ are also on the order of a few percentage points, $c(t)\approx (1-\frac{\tau}{t})$, which is the coefficient of $\Delta f_{\tau}$ in the G\^ateaux variation of Method 4.

An optimal hedge would thus need similar exposure as Method 4 to the forward rate between the last two maturities, as shown for particular discrete yield curves by Rebel \cite{cardano}.

In the variant of the Smith-Wilson method where $\alpha$ is chosen to ensure $|\bar f_{\kappa}-\bfi|\leq \epsilon$, $\alpha$ is itself a function of $z$ with parameters $\tau$, $\kappa$ and $\epsilon$. An implicit expression for $\alpha[z]$ can be derived from \eqref{sw_f}, but we will not pursue this here.

The G\^ateaux variation is now $\delta\bz_t[z|\Dz]+\frac{d\bz_t}{d\alpha}\delta\alpha[z|\Dz]$, where $\delta\bz_t[z|\Dz]$ is the variation in the Smith-Wilson method with fixed $\alpha$, $\frac{d\bz_t}{d\alpha}$ the derivative of the right hand side of \eqref{sw_z} with respect to $\alpha$, and $\delta\alpha[z|\Dz]$ is the first order variation of $\alpha[z]$. 

\subsection{Sensitivity with respect to the UFR}

If the extrapolation method changes, the value of the liabilities can change as well. It may not be possible to hedge against such model changes but knowledge of the sensitivity with respect to parameters could nevertheless be useful since the supervisory agency might want to introduce a new method or change the parameters of the current method.

The sensitivity with respect to the UFR can easily be calculated for the particular methods described above. We introduce $S:=-\frac{dL^*_T}{d\bfi}\big/L^*_T$, which can be seen as a duration with respect to the UFR.

We continue to assume $L_{\tau}=0$.

\subsubsection{Method 2}

Here $\frac{d}{d\bfi}\bD_t = \frac{d}{d\bfi}e^{-\bfi t} = -t e^{-\bfi t} = -t\bD_t$ and thus
$$
S = \frac{\int_\mE t\,dL^*_t}{L^*_T} = \dur[\bz,L],
$$
i.e.\ the duration of $L$.

\subsubsection{Method 3}

Method 3 has $\frac{d}{d\bfi}\bD_t = -(t-\tau)\bD_t$ and thus
$$
S = \frac{\int_\mE (t-\tau)\,dL^*_t}{L^*_T} = \dur[\bz,L]-\tau,
$$
which is the duration of $L$ \emph{minus $\tau$}, so its sensitivity is strictly less than that of Method 2. We call this quantity the \emph{excess duration above $\tau$};
$$
\excdur[y,C,\tau]:=\frac{\int_{\tau}^T (t-\tau)\,dC^*_t[y]}{C^*_T[y]}.
$$

\subsubsection{Method 5}

For Method 5,
\begin{align*}
\frac{d}{d\bfi}(t\bz_t) &= \begin{cases}  \frac{(t-\tau)^2}{2(\kappa-\tau)}, & \tau < t \leq \kappa, \\
                     t-\frac{\tau+\kappa}{2}, & t > \kappa.        \end{cases} 
\end{align*}
so
\begin{align*}
\frac{d}{d\bfi}\bD_t &= \frac{d}{d\bfi}e^{-t\bz_t} = -\frac{d}{d\bfi}(t\bz_t)\cdot\bD_t= \begin{cases}  -\frac{(t-\tau)^2}{2(\kappa-\tau)}\bD_t, & \tau < t \leq \kappa, \\
                     -\big(t-\frac{\tau+\kappa}{2}\big)\bD_t, & t > \kappa.        \end{cases} 
\end{align*}
and
\begin{align*}
-\frac{d}{d\bfi}L^*_T &= \int_{\tau}^{\kappa}\frac{(t-\tau)^2}{2(\kappa-\tau)}\,dL^*_t + \int_{\kappa}^{T}\left(t-\frac{\tau+\kappa}{2}\right)\,dL^*_t \\
&=\frac{1}{2}\bigg[\underbrace{\int_{\tau}^{\kappa}\frac{(t-\tau)^2}{\kappa-\tau}\,dL^*_t}_{=:I}+\int_{\kappa}^T (t-\tau)\,dL^*_t + \int_{\kappa}^T (t-\kappa)\,dL^*_t\bigg].\\
I &\leq \int_{\tau}^{\kappa}(t-\tau)\,dL^*_t \Rightarrow \\
S &\leq \frac{\excdur[\bz,L,\tau]+\excdur[\bz,L,\kappa]}{2}.\\
I &\geq 0\Rightarrow \\
S  &\geq \frac{(\kappa-\tau)L^*_{\kappa}}{2} + \excdur[\bz,L,\kappa].
\end{align*}
The sensitivity is thus less than the average of the excess durations above $\tau$ and $\kappa$, and we also have a lower bound that might be useful. Since $\excdur[\bz,L,\kappa]\leq \excdur[\bz,L,\tau]$, the sensitivity is always lower than that of Method 3.

\subsubsection{Method 6}

For the Smith-Wilson method,
\begin{align*}
-\frac{d}{d\bfi}\bD_t &= (t-\tau)\bD_t  - e^{-\bfi(t-\tau)}\frac{1-e^{-\alpha(t-\tau)}}{\alpha}D_\tau \\
&= (t-\tau)\bD_t - \frac{1}{\alpha}\big(\bD^{(1)}_t - \bD^{(\alpha)}_t \big) \\
&\to (t-\tau)\bD_t - (t-\tau)\bD^{(0)},
\end{align*}
as $\alpha\to 0$, where $\bD^{(\alpha)}_t$ is the discount factor of Method 3 with $\bfi+\alpha$ as UFR, instead of $\bfi$. Let $\bz^{(\alpha)}$ and be the corresponding discount curve, with $\bz$ still denoting the Smith-Wilson curve. We also have $\lim_{\alpha\to\infty}\frac{d}{d\bfi}\bD_t = -(t-\tau)\bD_t$. Taken altogether, 
\begin{align*}
S &= \excdur[\bz,L,\tau]-\frac{L^{*}_T[\bz^{(0)}] - L^{*}_T[\bz^{(\alpha)}]}{\alpha L^*[\bz]} \\
S &\geq \excdur[\bz,L,\tau] - \excdur[\bz^{(0)},L,\tau] \\
S &\leq \excdur[\bz,L,\tau],
\end{align*}
with $S$ approaching the bounds as $\alpha$ tends to 0 or $\infty$. The Smith-Wilson method also has a sensitivity that is less than that of Method 3. How it compares to Method 5 depends on the value of $\alpha$ and the liability cash flow.

\section{Conclusions and future research}\label{sec:conclusions}

We have presented a framework that can be used to derive an optimal first order hedge. In essence we have generalised the common practice of matching key rate durations at a finite number of times to maturity to a continuum of times to maturity. The advantage with this generalisation is that it allows for easy comparison between different extrapolation methods.

Among the extrapolation methods to which we have applied the framework, we find that some methods, including the Smith-Wilson method, need a hedge with exposure to the forward rate at the last liquid point, which is troublesome to execute in the market since it necessitates shorting of the penultimate liquid market point. Other methods, such as constant extrapolation of zero coupon yields (Method 2) or a discontinuous transition to a prespecified ultimate forward rate (Method 3) only requires long exposure to the last liquid point. However, Method 3 requires less exposure than Method 2, and would therefore be less dispruptive to the market as a whole. The method mandated by the SFSA is even less dispruptive to the market since it entails a gradual transition from market rates to predetermined rates. Its downside compared to Method 3 is that it only allows for a first order hedge whereas Method 3 can be hedged perfectly.

Future research could investigate other extrapolation methods such as the Nelson-Siegel-Svensson method. The framework could also be used to investigate the hedgeability of bootstrap methods, i.e.\ methods for interpolating liability discount factor from market observations.

This paper has only focused on instantaneous changed in yield curves. As time passes, maturities of the hedging instruments decreases from $\tau$ (for all methods except Method 5), and if one wants to hold a bond with maturity $\tau$, the existing hedge has to be sold and a new with longer maturity bought. Another area for future studies is how this would affect companies and fixed income markets.

\section*{Acknowledgement}

I thank Martin Bender, Mathias Lindholm and Jan Svedberg for comments on a draft of this paper.

\appendix
\section{Method 5}

\subsection{Discount yield}\label{z_method_5}

Here we derive $\bz$ for Method 5 (Section \ref{method_5}).

For $\tau < t \leq \kappa$,
\begin{align*}
t\bar z_t &= \int_0^{\tau}\bar f_s\,ds + \int_{\tau}^{t} \bar f_s\,ds = \tau z_{\tau} + \int_{\tau}^t \bar f_s\,ds \\
&= \tau z_{\tau} + \frac{1}{\kappa-\tau}\int_{\tau}^t(\kappa-s)f_s\,ds + \frac{1}{\kappa-\tau}\int_{\tau}^t(s-\tau)\bar f_{\infty}\,ds\\
&= \frac{\tau\kappa-\tau^2}{\kappa-\tau}z_{\tau} + \frac{\kappa}{\kappa-\tau}(t z_t-\tau z_{\tau}) - \frac{1}{\kappa-\tau}\int_{\tau}^ts f_s\,ds + \frac{(t-\tau)^2}{2(\kappa-\tau)}\bar f_{\infty}\\
&= \frac{\kappa t z_t-\tau^2 z_{\tau}}{\kappa-\tau}- \frac{1}{\kappa-\tau}\int_{\tau}^ts f_s\,ds + \frac{(t-\tau)^2}{2(\kappa-\tau)}\bar f_{\infty}\\
&= \frac{\kappa t z_t-\tau^2 z_{\tau}}{\kappa-\tau}- \frac{1}{\kappa-\tau}\bigg[s^2z_s\bigg]_{s=\tau}^{s=t}+\frac{1}{\kappa-\tau}\int_{\tau}^tsz_s\,ds+ \frac{(t-\tau)^2}{2(\kappa-\tau)}\bar f_{\infty} \\
&= \frac{\kappa-t}{\kappa-\tau}t z_t+\frac{1}{\kappa-\tau}\int_{\tau}^ts z_s\,ds+ \frac{(t-\tau)^2}{2(\kappa-\tau)}\bar f_{\infty}.
\end{align*}

For $t\geq \kappa$,
\begin{align*}
tz_t &= \int_0^{\kappa}\bar f_s\,ds + \int_{\kappa}^{t} \bar f_s\,ds = \kappa\bar z_{\kappa} + \int_{\kappa}^t\bar f_{\infty}\,ds = \kappa z_{\kappa} + (t-\kappa)\bar f_{\infty} \\
&= \frac{1}{\kappa-\tau}\int_{\tau}^{\kappa}s z_s\,ds+\frac{\kappa-\tau}{2}\bar f_{\infty}+(t-\kappa)\bar f_{\infty} \\
&= \frac{1}{\kappa-\tau}\int_{\tau}^{\kappa}s\bz(s)\,ds+\bigg(t-\frac{\tau+\kappa}{2}\bigg)\bar f_{\infty}.
\end{align*}

\subsection{Second order properties}\label{cnvx_method_5}

Let $L$ be a liability cash flow and $A$ the corresponding first order hedge (at $z$) of Method 5 (Section \ref{method_5}). Let $L_{\tau}=0$ in order to focus on the extrapolated part of the yield curve. Assume $\Dz_t = 1$ for $\tau\leq t\leq \kappa$. In this subsection we will show that
$$
\int_\mT t^2(\Delta z_t)^2dA^*_t > \int_\mE \left(t^2(\delta\bz_t[z|\Dz])^2 - t\delta^2\bz_t[z|\Dz]\right)\,dL^*_t.
$$
It suffices to show this for all lump sum liabilities $L_t^*=\one\{t\geq \sigma\}$ with corresponding hedge $A^*_t$.

To ease notation, let
\begin{align*}
a(\sigma) &:= \int_\mT t^2(\Dz_t)^2dA^*_t, \\
l(\sigma) &:= \int_\mE \left(t^2(\delta\bz_t[z|\Dz])^2 - t\delta^2\bz_t[z|\Dz]\right)\,dL^*_t.
\end{align*}
We have for $\tau<\sigma\leq\kappa$
$$
dA^*_t = \begin{cases} \frac{dt}{\kappa-\tau}, & \tau<t\leq \sigma \\
\frac{\kappa-\sigma}{\kappa-\tau}, & t=\sigma\end{cases}
$$
and for $\sigma>\kappa$,
$$
dA^*_t = \frac{dt}{\kappa-\tau},\quad\text{for $\tau<t\leq\kappa$}.
$$

We have
\begin{align*}
\delta\bz_t[z|\Dz]&=\begin{cases} \frac{\kappa-t}{\kappa-\tau}+\frac{1}{t}\frac{1}{\kappa-\tau}\int_{\tau}^t s\,ds,&\tau<t\leq\kappa,\\ \frac{1}{t}\frac{1}{\kappa-\tau}\int_{\tau}^{\kappa} s\,ds,&t>\kappa,\end{cases} \\
&=\begin{cases} \frac{\kappa^2-\tau^2-(\kappa-t)^2}{2t(\kappa-\tau)},&\tau<t\leq\kappa,\\ \frac{\kappa^2-\tau^2}{2t(\kappa-\tau)},&t>\kappa,\end{cases}\\
&=\begin{cases} \frac{\kappa^2-\tau^2-(\kappa-t)^2}{2t(\kappa-\tau)},&\tau<t\leq\kappa,\\ \frac{\kappa+\tau}{2t},&t>\kappa,\end{cases}
\end{align*}
and $\delta^2\bz[z|\Dz]=0$.

Consider first the case $\sigma>\kappa$.
\begin{align*}
a(\sigma) &= \int_{\tau}^{\kappa}\frac{t^2}{\kappa-\tau}\,dt =\frac{\kappa^3-\tau^3}{3(\kappa-\tau)}=\frac{\kappa^2+\kappa\tau+\tau^3}{3} \\
l(\sigma) &= \left(\frac{\kappa+\tau}{2}\right)^2 = \frac{\kappa^2+2\kappa\tau+\tau^2}{4} \\
\Rightarrow\quad a(\sigma) - l(\sigma) &= \frac{(\kappa-\tau)^2}{12}>0.
\end{align*}

Now consider the case $\tau<\sigma\leq\kappa$, and introduce $\lambda:=\frac{\sigma-\tau}{\kappa-\tau}$.
\begin{align*}
a(\sigma) &= \int_{\tau}^{\sigma}\frac{t^2}{\kappa-\tau}\,dt+\sigma^2\frac{\kappa-\sigma}{\kappa-\tau}\\
l(\sigma) &= \frac{(\kappa^2-\tau^2-(\kappa-\sigma)^2)^2}{4(\kappa-\tau)^2}.
\end{align*}
We note that $\lim_{\sigma\to\tau}a(\sigma)=\lim_{\sigma\to\tau}l(\sigma)=\tau^2$.
\begin{align*}
a'(\sigma) &= 2\sigma\frac{\kappa-\sigma}{\kappa-\tau}=2\sigma(1-\lambda).\\
l'(\sigma) &= \frac{(\kappa^2-\tau^2-(\kappa-\sigma)^2)(\kappa-\sigma)}{(\kappa-\tau)^2}=(\tau+\lambda\kappa+(1-\lambda)\sigma)(1-\lambda) \\
a'(\sigma)-l'(\sigma) &= (2\sigma-\tau-\lambda\kappa-(1-\lambda)\sigma)(1-\lambda) \\
&= (\sigma-\tau-\lambda(\kappa-\sigma))(1-\lambda) \\
&= (\kappa-\tau)(\lambda-\lambda(1-\lambda))(1-\lambda) = (\kappa-\tau)\lambda^2(1-\lambda)>0.
\end{align*}
Since $\lim_{\sigma\to\tau}a(\sigma)-l(\sigma) = 0$ and $a'(\sigma)-l'(\sigma)>0$, $a(\sigma)-l(\sigma)>0$ also for $\tau<\sigma\leq\kappa$, and we are done.

\subsection{Sensitivity with respect to UFR}\label{ufr_sens_method_5}

\begin{align*}
\frac{d}{d\bfi}(t\bz_t) &= \begin{cases}  \frac{(t-\tau)^2}{2(\kappa-\tau)}, & \tau < t \leq \kappa, \\
                     t-\frac{\tau+\kappa}{2}, & t > \kappa.        \end{cases} 
\end{align*}
so
\begin{align*}
\frac{d}{d\bfi}\bD_t &= \frac{d}{d\bfi}e^{-t\bz_t} = -\frac{d}{d\bfi}(t\bz_t)\cdot\bD_t= \begin{cases}  -\frac{(t-\tau)^2}{2(\kappa-\tau)}\bD_t, & \tau < t \leq \kappa, \\
                     -\big(t-\frac{\tau+\kappa}{2}\big)\bD_t, & t > \kappa.        \end{cases} 
\end{align*}
and
\begin{align*}
-\frac{d}{d\bfi}\bP[z;L] &= \int_{\tau}^{\kappa}\frac{(t-\tau)^2}{2(\kappa-\tau)}\,dL^*_t + \int_{\kappa}^{T}\left(t-\frac{\tau+\kappa}{2}\right)\,dL^*_t \\
&=\frac{1}{2}\bigg[\int_{\tau}^{\kappa}\frac{(t-\tau)^2}{\kappa-\tau}\,dL^*_t+\int_{\kappa}^T (t-\tau)\,dL^*_t + \int_{\kappa}^T (t-\kappa)\,dL^*_t\bigg].\\
&\leq \frac{1}{2}\bigg[\int_{\tau}^{\kappa}(t-\tau)\,dL^*_t+\int_{\tau}^T (t-\tau)\,dL^*_t + \int_{\kappa}^T (t-\kappa)\,dL^*_t\bigg]. \\
&= \frac{1}{2}\bigg[\int_{\tau}^T (t-\tau)\,dL^*_t + \int_{\kappa}^T (t-\kappa)\,dL^*_t\bigg].
\end{align*}
The sensitivity is thus less than the average of the excess durations above $\tau$ and $\kappa$.

\section{Method 6}

\subsection{Continuous version}\label{z_method_6}


Andersson and Lindholm \cite{sw_rep} have derived the following representation of the Smith-Wilson discount factors. Essentially, it shows that a Smith-Wilson discount factor for time to maturity $t$ is the expected value of a Gaussian process at time $t$ conditioned on its values at times $t_i$, $i=1,\dots,N$, being the observed zero coupon prices for time to maturities $t_i$, $i=1,\dots,N$.

\begin{thm}
Let $X$ be a Gaussian Ornstein-Uhlenbeck process with stochastic differential $dX_t=-\alpha X_t\,dt + \alpha^{3/2}\,dB_t$ and initial value $X_0\sim N(0,\alpha^2)$ independent of $B$ ; $X^*_t:=\int_0^t X_s\,ds$; and $Y_t:=e^{-\bfi t}(1+X^*_t)$. Then
$$
\bar D_t = \Ex[Y_t|Y_{t_i} = D_{t_i}; i=1,\dots,N].
$$
\end{thm}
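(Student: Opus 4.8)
The plan is to recognise that $Y$ is a Gaussian process, so that the asserted identity is nothing but the standard linear-regression formula for the conditional expectation of jointly Gaussian variables, applied once the mean and covariance functions of $Y$ are known. Since $X^*_t=\int_0^t X_s\,ds$ is a continuous linear functional of the Gaussian pair $(X_0,B)$, the process $Y_t=e^{-\bfi t}(1+X^*_t)$ is Gaussian, and for the jointly Gaussian vector $(Y_t,Y_{t_1},\dots,Y_{t_N})$ one has
$$
\Ex[Y_t\mid Y_{t_i}=D_{t_i};\,i=1,\dots,N]=\Ex[Y_t]+\sum_{i=1}^N \Cov(Y_t,Y_{t_i})\,\zeta_i,
$$
where $\bm{\zeta}=(\zeta_1,\dots,\zeta_N)$ solves $\sum_{j=1}^N \Cov(Y_{t_i},Y_{t_j})\,\zeta_j=D_{t_i}-\Ex[Y_{t_i}]$. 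I would then establish that $\Ex[Y_t]=e^{-\bfi t}$ and $\Cov(Y_s,Y_t)=W(s,t)$; granting these, the displayed right-hand side equals $e^{-\bfi t}+\sum_i W(t,t_i)\zeta_i$ with $\bm{\zeta}$ determined by $\sum_j W(t_i,t_j)\zeta_j=D_{t_i}-e^{-\bfi t_i}$, which is precisely the Smith-Wilson fitting system encoding $\bar D_{t_i}=D_{t_i}$. Hence the conditional expectation is exactly $\bar D_t$. The covariance matrix appearing here is the same one whose non-singularity is already presupposed in defining the Smith-Wilson coefficients $\zeta_i$.

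The mean is immediate: since $\Ex[X_0]=0$, the Ornstein-Uhlenbeck equation gives $\Ex[X_t]=e^{-\alpha t}\Ex[X_0]=0$, so $\Ex[X^*_t]=0$ and $\Ex[Y_t]=e^{-\bfi t}$, matching the leading term of $\bar D_t$. The substance of the proof is the covariance. I would solve the SDE as $X_t=X_0e^{-\alpha t}+\alpha^{3/2}\int_0^t e^{-\alpha(t-s)}\,dB_s$ and then either compute the Ornstein-Uhlenbeck covariance $\Cov(X_u,X_v)=\tfrac{\alpha^2}{2}\big(e^{-\alpha|u-v|}+e^{-\alpha(u+v)}\big)$ and integrate it over $[0,s]\times[0,t]$, or integrate in time first, via a Fubini argument writing $X^*_t=X_0\frac{1-e^{-\alpha t}}{\alpha}+\alpha^{1/2}\int_0^t\big(1-e^{-\alpha(t-s)}\big)\,dB_s$ and applying the It\^o isometry together with $\Var(X_0)=\alpha^2$. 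Either route yields, for $s\le t$,
$$
\Cov(X^*_s,X^*_t)=\alpha s-e^{-\alpha t}\sinh(\alpha s)=\alpha\min(s,t)-e^{-\alpha\max(s,t)}\sinh(\alpha\min(s,t)),
$$
whence $\Cov(Y_s,Y_t)=e^{-\bfi(s+t)}\Cov(X^*_s,X^*_t)=W(s,t)$, as required.

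I expect the double integral to be the only genuine obstacle, and the feature worth stressing is that the non-stationary initial variance $\Var(X_0)=\alpha^2$ and the diffusion coefficient $\alpha^{3/2}$ are tuned precisely so that the transient term proportional to $e^{-\alpha(u+v)}$ (arising from $X_0$) combines with the stationary term proportional to $e^{-\alpha|u-v|}$ to produce exactly the hyperbolic-sine kernel of $W$; a stationary initialisation with $\Var(X_0)=\alpha^2/2$ would fail to reproduce it. Everything else then reduces to the Gaussian conditioning identity and the observation that its defining linear system coincides with the Smith-Wilson equations for $\bm{\zeta}$.
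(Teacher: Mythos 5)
Your proposal is correct and follows essentially the paper's own proof: solve the SDE, show $\Ex[Y_t]=e^{-\bfi t}$, show $\Cov(Y_s,Y_t)=e^{-\bfi(s+t)}\Cov(X^*_s,X^*_t)=W(s,t)$, and conclude via the Gaussian conditional-expectation formula, identifying its linear system with the Smith-Wilson fitting equations for $\bm{\zeta}$. Your first covariance route (integrating the Ornstein-Uhlenbeck covariance $\tfrac{\alpha^2}{2}\big(e^{-\alpha|u-v|}+e^{-\alpha(u+v)}\big)$ over the rectangle) is literally the paper's computation, and your stochastic-Fubini/It\^o-isometry alternative is a correct, slightly cleaner variant of that single step.
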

Note that the Ornstein-Uhlenbeck process in the theorem is not stationary; for stationarity $X_0$ should be $N(0,\frac{\alpha^2}{2})$. We have presented the theorem above in a more streamlined form than Andersson and Lindholm, and we therefore also provide a proof.
\begin{proof}
We have
$$
X_t = X_0e^{-\alpha t}+\alpha^{3/2}\int_{\tau}^te^{-\alpha(t-s)}\,dB_s = e^{-\alpha t}\bigg(X_0+\alpha^{3/2}\int_{\tau}^te^{\alpha s}\,dB_s\bigg)
$$
Thus, $\Ex[X_t]=0$ for all $t$, which imply $\Ex[X^*_t]=0$, and hence $\Ex[Y_t]=e^{-\bfi t}$. For $s\leq t$,
\begin{align*}
\Cov(X_s,X_t) &= e^{-\alpha(s+t)}\Cov(e^{\alpha s}X_s,e^{\alpha t}X_t) \\
&= e^{-\alpha(s+t)}\Cov\bigg(X_0+\alpha^{3/2}\int_{\tau}^s e^{\alpha u}\,dB_u,X_0+\alpha^{3/2}\int_{\tau}^t e^{\alpha u}\,dB_u\bigg) \\
&= e^{-\alpha(s+t)}\bigg(\Var(X_0) + \alpha^3\int_{\tau}^s e^{2\alpha u}\,du\bigg) \\
&= e^{-\alpha(s+t)}\bigg(\alpha^2+\frac{\alpha^2}{2}(e^{2\alpha s}-1)\bigg) \\
&= \alpha^2 e^{-\alpha t} \cosh(\alpha s).
\end{align*}
This in turn gives us, with $s\leq t$,
\begin{align*}
\Cov(X^*_s,X^*_t) &= \iint\limits_{0\leq u\leq s \atop 0\leq v\leq t}\Cov(X_u,X_v)\,dudv \\
&= \bigg(\iint\limits_{0\leq u\leq v \leq s}+\iint\limits_{0\leq v \leq u\leq s}+\iint\limits_{0\leq u\leq s \leq v\leq t}\bigg)\Cov(X_u,X_v)\,dudv \\
&= \bigg(2\iint\limits_{0\leq u\leq v \leq s}+\iint\limits_{0\leq u\leq s \leq v\leq t}\bigg)\Cov(X_u,X_v)\,dudv \\
&= 2\int_0^s\alpha e^{-\alpha v}\bigg(\int_0^v\alpha\cosh(\alpha u)\,du\bigg)\,dv \\
&\quad + \int_s^t\alpha e^{-\alpha v}dv\int_0^s\alpha\cosh(\alpha u)\,du \\
&= 2\int_0^s\alpha e^{-\alpha v}\sinh(\alpha v)\,dv + (e^{-\alpha s}-e^{-\alpha t})\sinh(\alpha s)\\
&= \alpha s - e^{-\alpha t}\sinh(\alpha s).
\end{align*}
Finally we can deduce
\begin{align*}
\Cov(Y_s,Y_t) &= e^{-\bfi(s+t)}\Cov(X^*_s,X^*_t) = W(s,t).
\end{align*}
Let $\bm{Y}:=(Y_{\tau},\dots,Y_{t_N})'$ and $\bm{D}:=(D_{\tau},\dots,D_{t_N})'$. Since $Y$ is a Gaussian process
\begin{align*}
\Ex[Y_t|\bm{Y}=\bm{D}] &= \Ex[Y_t] + \Cov(Y_t,\bm{Y})\Cov(\bm{Y},\bm{Y})^{-1}(\bm{D}-\Ex[\bm{Y}]) \\
&= e^{-\bfi t} + \Cov(Y_t,\bm{Y})\bm{\zeta} \\
&= e^{-\bfi t} + \sum_{i=1}^N\Cov(Y_t,Y_{t_i})\zeta_i \\
&= e^{-\bfi t} + \sum_{i=1}^N W(t,t_i)\zeta_i,
\end{align*}
as desired, where we have identified $\bm{\zeta}:=(\zeta_1,\dots,\zeta_N)':=\Cov(\bm{Y},\bm{Y})^{-1}(\bm{D}-\Ex[\bm{Y}])$.
\end{proof}


We will use this representation to obtain a continuous version of the method when the curve is fitted to all zero coupon bonds with time to maturities between 0 and $\tau$. We define the continuous version of the Smith-Wilson method thus:
\begin{equation}\label{sw_cont1}
\bar D_t := \Ex[Y_t|Y_s = D_s; 0\leq s \leq \tau],
\end{equation}
where $Y_t$ is the Gaussian process of the theorem. Before we try to simplify this expression we must learn more about the stochastic processes of the theorem.

Note that the process $X^*$ is not Markov whereas the augmented process $(X^*,X)$ is. For $t\geq s$ we have
\begin{align*}
X^*_t &= X^*_s+\int_s^tX_u\,du \\
&= X^*_s+\int_s^t\bigg(X_s e^{-\alpha(u-s)}+\alpha^{3/2}\int_s^ue^{-\alpha(u-v)}\,dB_v\bigg)\,du \\
&= X^*_s+X_s\frac{1-e^{-\alpha(t-s)}}{\alpha}+\alpha^{3/2}\int_s^t\int_s^ue^{-\alpha(u-v)}\,dB_v\,du,
\end{align*}
and in particular
$$
\Ex[X^*_t|X^*_s=x^*,X_s=x] = x^*+x\frac{1-e^{-\alpha(t-s)}}{\alpha}.
$$
Also, the sigma-algebra generated by $(X^*_s,X_s)$ is a subset of the sigma-algebra generated by $\{X^*_u;0\leq u\leq s\}$, i.e.\ if we know the whole trajectory of $X^*$ up to time $s$ we know both its value \emph{and its derivative} at time $s$. Thus, for $t\geq s$,
$$
\Ex[X^*_t|X^*_u = x^*_u; 0\leq u\leq s] = \Ex[X^*_t|X^*_s=x^*_s,X_s=x_s] = x^*_s+x_s\frac{1-e^{-\alpha(t-s)}}{\alpha},
$$
where $x_s := \frac{d}{ds}x^*_s$.

Let us now return to the definition \eqref{sw_cont1}. For $0\leq t\leq \tau$, we clearly have $\bar D_t= D_t$, i.e.\ $\bar z_t = z_t$. Let $x^*_t$ be defined by $D_t=e^{-tz_t} =: e^{-\bfi t}(1+x^*_t)$ and let $x_t := \frac{d}{dt}x^*_t$. We have
$$
x^*_t = e^{\bfi t}D_t-1 = e^{\bfi t-\int_0^t f_s\,ds}-1,
$$
so
$$
x_t = (\bfi - f_t)e^{\bfi t}D_t.
$$

For $t\geq \tau$,
\begin{align*}
e^{-t\bar z_t}= D_t &= \Ex[Y_t|Y_s = D_s; 0\leq s\leq \tau] \\
&= \Ex[e^{-\bfi t}(1+X^*_t)|X^*_s=x^*_s; 0\leq s\leq \tau] \\
&= e^{-\bfi t}(1+\Ex[X^*_t|X^*_{\tau}=x^*_{\tau}, X_{\tau}=x_{\tau}]) \\
&= e^{-\bfi t}\bigg(1+x^*_{\tau}+x_{\tau}\frac{1-e^{-\alpha(t-\tau)}}{\alpha}\bigg)\\
&= e^{-\bfi(t-\tau)}D_{\tau}\bigg(1+(\bfi - f_{\tau})\frac{1-e^{-\alpha(t-\tau)}}{\alpha}\bigg).
\end{align*}

\subsection{Example of arbitrage for Smith-Wilson}\label{arb}

If one fits a Smith-Wilson curve to one zero coupon bond with yield 0, yields will start negative. Indeed, 
\begin{align*}
D_{t_1} &= \bar D_{t_1} \\
\iff\quad 1 &= e^{-\bar f_{\infty}t_1}+W(t_1,t_1)\zeta_1 \\
&= e^{-\bar f_{\infty}t_1} + e^{-2\bar f_{\infty}t_1}(\alpha t_1- e^{-\alpha t_1}\sinh(\alpha t_1))\zeta_1 \\
\Rightarrow\quad \zeta_1 &= e^{\bar f_{\infty}t_1}\frac{e^{\bar f_{\infty}t_1}-1}{\alpha t_1- e^{-\alpha t_1}\sinh(\alpha t_1)} \\
\bar D_t &= e^{-\bar f_{\infty}t}\left( 1 + \frac{\alpha t - e^{-\alpha t_1}\sinh(\alpha t)}{\alpha t_1- e^{-\alpha t_1}\sinh(\alpha t_1)}(e^{\bar f_{\infty}t_1}-1) \right),\quad 0\leq t\leq t_1 \\
\Rightarrow \frac{d}{dt} \bar D_t |_{t=0} &= -\bar f_{\infty} + \underbrace{\frac{\alpha - \alpha e^{-\alpha t_1}}{\alpha t_1- e^{-\alpha t_1}\sinh(\alpha t_1)}}_{>\frac{1}{t_1}}\underbrace{(e^{\bar f_{\infty}t_1}-1)}_{>\bar f_{\infty}t_1} > 0.
\end{align*}
Hence $\bar D_t$ is \emph{increasing} at $t=0$ and since $\bar D_0=1$ we thus have discount factors greater than one for small $t$, which corresponds to negative yields.

In reality --- as opposed to the idealised setting of Mathematical Finance --- negative yields only imply an arbitrage if its feasible to hold cash at zero cost, which is not the case for larger amounts of cash, since they would have to be put in a guarded vault. That the Smith-Wilson method can produce negative yields might therefore not be such a big problem in practice.

\end{document}